\documentclass[journal]{IEEEtran}
\ifCLASSINFOpdf
\else
\fi
\hyphenation{op-tical net-works semi-conduc-tor}
\usepackage{graphicx}
\usepackage{float}
\usepackage[cmex10]{amsmath}
\usepackage{amsmath,amssymb,amsthm}
\usepackage[compress]{cite}
\usepackage{balance}
\usepackage[UKenglish]{babel}
\usepackage{amssymb}
\usepackage{graphicx,picture,calc}
\usepackage{epstopdf}
\usepackage{subcaption}
\usepackage{mathtools}
\usepackage{booktabs}
\usepackage[compress]{cite}
\usepackage{enumerate}
\usepackage[norule]{footmisc}
\usepackage{algorithm}
\usepackage{algpseudocode}
\usepackage{amsmath}
\usepackage{array}
\usepackage{multirow}
\usepackage{dblfloatfix}
\usepackage{pifont}
\usepackage{color}
\usepackage{xcolor}

\makeatother

\newtheorem{Proposition}{\bf{Proposition}}
\newtheorem{Definition}{\bf{Definition}}
\newcolumntype{P}[1]{>{\centering\arraybackslash}p{#1}}
\newcolumntype{M}[1]{>{\centering\arraybackslash}m{#1}}

\hyphenation{op-tical net-works semi-conduc-tor}
\begin{document}

\title{A General Conditional BER Expression of Rectangular QAM in the Presence of Phase Noise} 
\author{Thanh~V.~Pham,~
	    Thang~V.~Nguyen,~
        and~Anh~T.~Pham \\
\IEEEauthorblockA{$\textnormal{Computer Communications Lab., University of Aizu, Aizuwakamatsu, Japan}$\\
Emails: \{tvpham, d8212104, pham\}@u-aizu.ac.jp
}}
\maketitle 

\begin{abstract}
In this paper, we newly present a closed-form bit-error rate (BER) expression for an $M$-ary pulse-amplitude modulation ($M$-PAM) over  additive white Gaussian noise (AWGN) channels by analytically characterizing the bit decision regions and positions. The obtained expression is then used to derive the conditional BER of a rectangular quadrature amplitude modulation (QAM) for a given value of phase noise. Numerical results show that the impact of phase noise on the conditional BER performance is proportional to the constellation size. Moreover, it is observed that given a constellation size, the square QAM achieves the lowest phase noise-induced performance loss  compared to other rectangular constellations. 
\end{abstract}
\begin{IEEEkeywords}
QAM, Gray codes, bit-error rate, phase noise.
\end{IEEEkeywords}
\IEEEpeerreviewmaketitle
\section{Introduction}
Quadrature amplitude modulation (QAM) is a family of modulation schemes, which are widely used in today's communication systems due to their spectral efficiencies. Communication systems employing QAM requires a phase recovery for signal demodulation at the receiver. This phase recovery process is often not perfect due to hardware impairments and/or imperfect channel estimation \cite{Gappmair2017}. Consequently, phase noise can potentially be introduced to the received waveform causing an additional performance loss.

Regarding the impact of phase noise on the bit-error rate (BER) performance, there has been a number of studies for $M$-ary phase-shift keying (PSK) modulation ($M$-PSK) \cite{Jang2013,Song2015,Gappmair2017}, and references therein. Remarkably, for the case of rectangular QAM, an exact closed-form conditional BER expression under AWGN channels has been reported very recently \cite{Jafari2020}
where the derivation follows the similar approach presented in \cite{Cho2002} for the case without phase noise. Specifically, exact expressions were derived for the first two bits of symbols on the I channel. General expressions for arbitrary bits of the I and Q channels were then obtained based on the regularities of the particular Gray code sequence (labeling) being considered. It should be noted that there is a number of ways to label symbols using the Gray code in a QAM constellation, resulting in different BERs. As shown in \cite{Agrell2004} that under particular assumption on the channel, binary reflected Gray code (BRGC) sequences are the optimal labeling scheme for $M$-PSK, $M$-ary pulse-amplitude modulation ($M$-PAM), and rectangular QAM. Although BRGC sequences are equivalent in terms of the BER, each gives rise to a particular characterization of bit decision regions and positions, resulting in a different closed-form expression. Note that the authors in \cite{Jafari2020,Cho2002} assumed the original BRGC sequence proposed by Frank Gray in \cite{Gray1953} for their derivations.

In this paper, we aim at providing a rigorous derivation for the conditional BER of rectangular QAM over AWGN channels under the presence of phase noise. The obtained expression serves as a benchmark for evaluating the average BER in a future study.  In our previous work \cite{Nguyen2020}, we gave a sketch of the derivation and numerically calculated the BERs for two specific cases of $8 \times 4$ and $4 \times 4$ QAM. This work attempts at a complete analysis for the case of general rectangular QAM. Specifically, we first discuss in Section II the relationship between BRGC sequences and characterizations of bit decision regions in the case of $M$-PAM. This serves as the motivation for choosing the original BRGC sequence in our analysis. Unlike the previous works where closed-form expressions were obtained through generalization from specific cases \cite{Jafari2020,Cho2002}, our approach is to derive an exact closed-form expression for an $M$-PAM 
by analytically characterizing the bit decision regions and positions. The resulted expression can then be naturally applied to compute the conditional BER of a rectangular QAM conditioned on a given phase noise in Section III. Numerical and simulation results are given in Section IV to verify the accuracy of the analysis and to illustrate the impact of phase noise on the conditional BER.

\section{Bit Error Rate of PAM}
\subsection{Signal Model}
Let us consider an $M$-ary PAM, where $\log_2M$ is a positive integer. 
The signal waveforms can be expressed as 
\begin{align}
s(t) = A_M\cos2\pi f_c t, \hspace{5mm} 0 \leq t < T,
\end{align}
where $A_M$ is the signal amplitude, which is selected from the set $\{\pm d, \pm 3d, \hdots, \pm (M- 1)d\}$. $f_c$ is the carrier frequency, and $T$ is the symbol duration \cite{Cho2002}. The $M$ symbols are drawn from a BRGC sequence  with the distance between any two adjacent symbols being $2d$, which is given by
\begin{align}
d = \sqrt{\frac{3\log_2(M)E_b}{M^2-1}},
\end{align}
where $E_b$ is the bit energy. The received waveform after being distorted by an additive noise is then written as
\begin{align}
r(t) = s(t) + n(t),
\end{align}
where $n(t)$ denotes a zero-mean AWGN with two-sided power spectrum density $N_0/2$. In the following, the time index $t$ is omitted for simplicity. 
\begin{figure}[h]
\centering
\includegraphics[width = 8.8cm, height = 2.5cm]{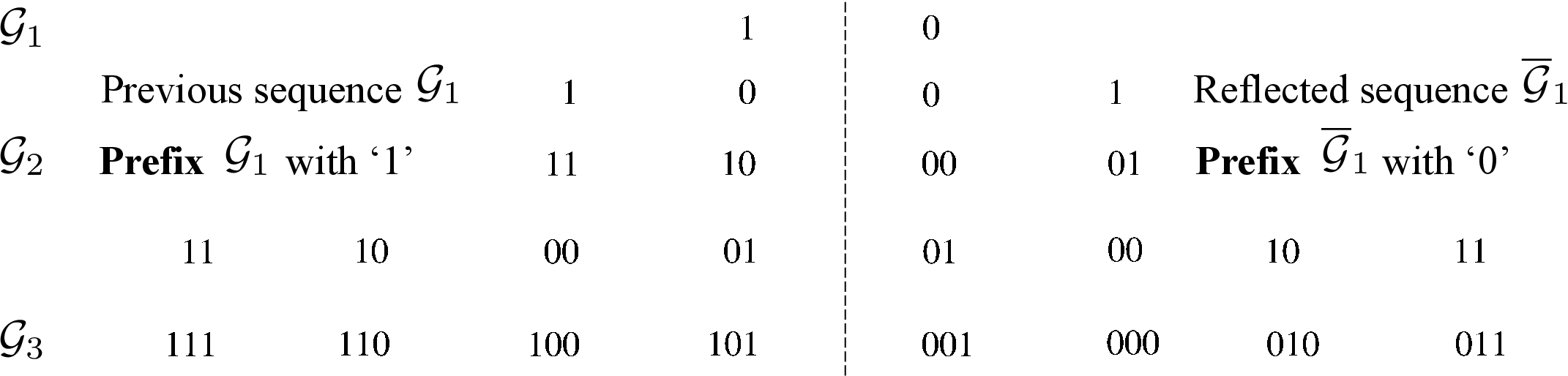}
\caption{Standard construction of $\mathcal{G}_3$.}
\label{GrayConstruction}
\end{figure}
\begin{figure*}[bt]
\centering
\begin{subfigure}[b]{0.48\textwidth}
\includegraphics[width = 8.2cm, height = 2.8cm]{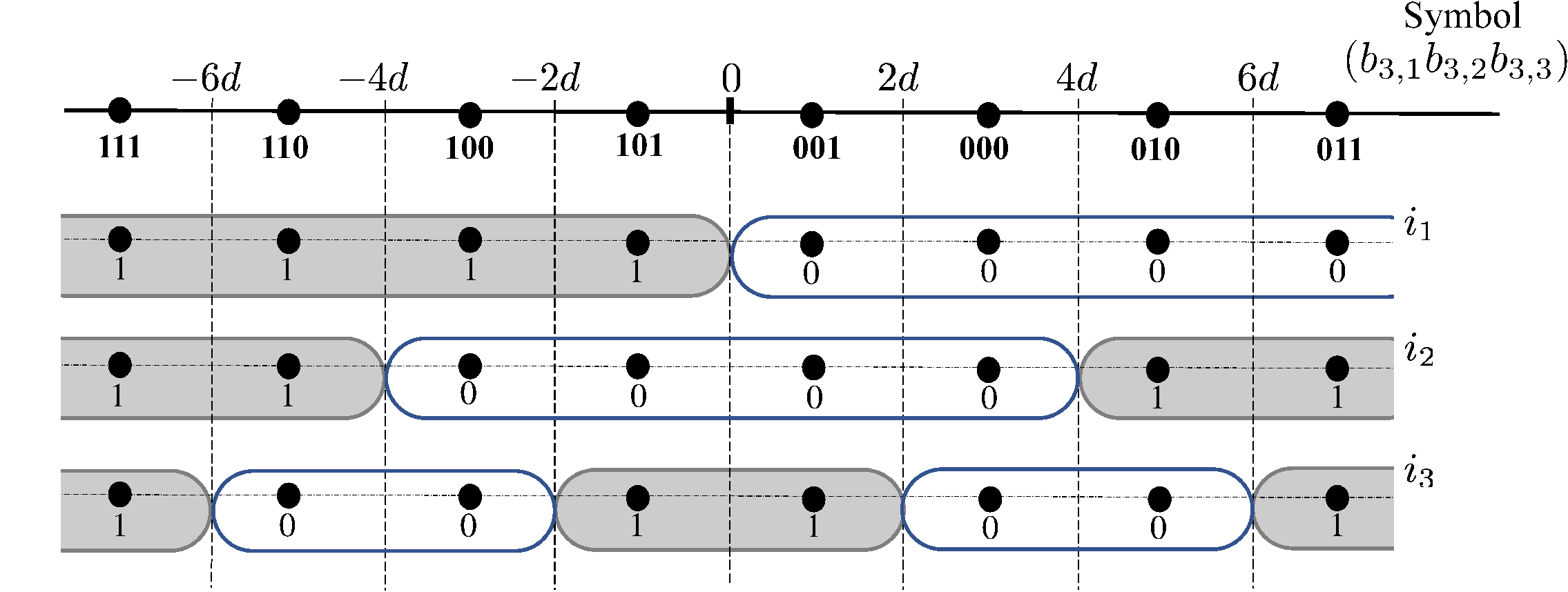}
\caption{Bit decision regions and positions according to $\mathcal{G}_3$.}
\label{bit_demaping_1}
\end{subfigure}
\centering
\begin{subfigure}[b]{0.48\textwidth}
\includegraphics[width = 8.2cm, height = 2.8cm]{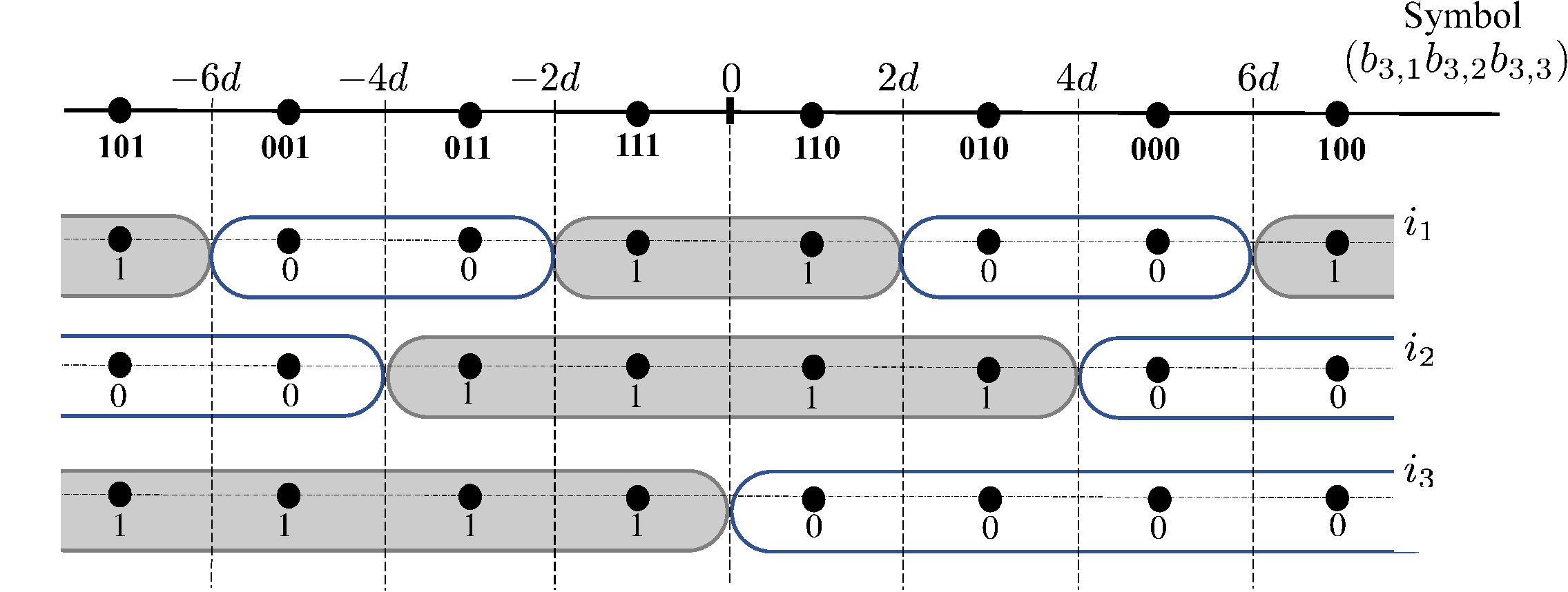}
\caption{Bit decision regions and positions according to $\mathcal{G}'_3$.}
\label{bit_demaping_2}
\end{subfigure}
\caption{An example of two different 8-PAM signal point arrangements.}
\label{decRegion-8PAM}
\end{figure*}
\subsection{Bit Decision Regions}
\subsubsection{Construction of Gray code sequences}
Since our analytical framework directly depends on Gray code sequences' construction, specifically the BRGC, we discuss this issue in this section. For the sake of readability, let us revisit the construction of BRGC sequences proposed in \cite{Gray1953}. According to this approach, let $\mathcal{G}_1 = (1, 0)$ be the trivial 1-bit sequence. For $n \geq 2$, denote $\mathcal{G}_{n-1}$ as an $(n-1)$-bit BRGC sequence and $\overline{\mathcal{G}}_{n-1}$ as the reflected version of $\mathcal{G}_{n-1}$ (i.e., codewords of $\mathcal{G}_{n-1}$ are placed in the reverse order). An extra bit `1' and `0' are added to each codeword of $\mathcal{G}_{n-1}$ and $\overline{\mathcal{G}}_{n-1}$ from the left (i.e., prefixing), respectively. An $n$-bit BRGC sequence $\mathcal{G}_n$ is then obtained by concatenating the two newly formed sequences. We regard this as the standard construction, and the resulted sequence is called the standard BRGC sequence. For example, Fig. \ref{GrayConstruction} illustrates the generation of the standard 3-bit BRGC sequence $\mathcal{G}_3$. Note that, from the standard $\mathcal{G}_n$ sequence, it is possible to generate another $n!2^n - 1$ sequences from it by means of permutation and/or complementation of all the codewords' bits in the same manner \cite{Arazi1984}. For example, permuting the first and third bits and complementing the second bit in all codewords of $\mathcal{G}_3$ result in a new Gray code sequence $\mathcal{G}'_3 = (101~001~011~111~110~010~000~100)$.

In this paper, we are interested in an $M$-PAM where the arrangement of symbols\footnote{We use the terms ``symbols" and ``codewords" interchangeably.} is drawn from the standard $\mathcal{G}_{\log_2(M)}$ BRGC sequence. By convention, codewords in the first and second half of the sequence are placed in the left (negative $A_M$ values) and right (positive $A_M$ values) half plane of the signal space, respectively. Denote $K = \log_2(M)$ and let $b_{K,k}$ be the symbol's $k-$th bit from the left ($k \leq K$). It is worth mentioning that different code sequences lead to different decision regions of $b_{K, k}$ as illustrated in Fig. \ref{decRegion-8PAM} for the case of 8-PAM with symbol arrangements according to $\mathcal{G}_3$ and $\mathcal{G}'_3$. In this example, although the two sequences result in different decision regions of each bit (hence its error rate), the overall BERs of the two cases are the same.  In fact, \cite{Agrell2004} shows that trivial operations on codewords of a sequence such as bit permutation and complementation do not affect its BER. Hence, it suffices to derive the BER of an $M$-PAM whose symbol arrangement is drawn from the standard $\mathcal{G}_{K}$ sequence. 
\subsubsection{Analytical characterization of bit decision regions}
To facilitate the discussion, the following definition and proposition are necessary. 
\begin{Definition}
Let $b^i_{K, 1}b^i_{K,2}...b^i_{K,K}$ ($b^i_{K, k} \in \{0, 1\}$) be the $i-$th codeword of $\mathcal{G}_K$. Then, $\mathcal{B}_{K, k} = \left(b^1_{K, k}, b^2_{K, k}, \cdots, b^{2^K}_{K, k}\right)$ is defined as a binary $2^K$-tuple  whose $i-$th element  is the value of  $b_{K, k}$ in the $i-$th codeword of $\mathcal{G}_K$. 
\end{Definition} 
\begin{Proposition}
 \begin{align}
\mathcal{B}_{K, k} \!=\! \left\{ {\begin{array}{*{20}{l}}
\!\!\!(\underbrace{1, \cdots, 1}_{2^{K-1}}, \underbrace{0, \cdots, 0}_{2^{K-1}}) \hspace*{0pt}\hfill {{\text{if~~}}} k = 1, \\
\!\!\!(\underbrace{1, \cdots, 1}_{2^{K-k}}, \underbrace{0, \cdots, 0}_{2^{K-k + 1}}, \underbrace{1, \cdots, 1}_{2^{K-k + 1}}, \cdots, \underbrace{0, \cdots, 0}_{2^{K-k + 1}},  \underbrace{1, \cdots, 1}_{2^{K-k }}) \nonumber \\ \hspace*{0pt}\hfill {{\text{if~~}}}  k \geq 2. \\
\end{array}} \right.
\end{align}
\end{Proposition}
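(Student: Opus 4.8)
The plan is to prove the formula by induction on $K$, using directly the recursive construction of the standard BRGC recalled above, namely that $\mathcal{G}_K$ is the bit-$1$-prefixed copy of $\mathcal{G}_{K-1}$ followed by the bit-$0$-prefixed copy of $\overline{\mathcal{G}}_{K-1}$. For the base case $K=1$ we have $\mathcal{G}_1=(1,0)$, so $\mathcal{B}_{1,1}=(1,0)$, which matches the $k=1$ branch of the formula with both half-blocks of length $2^{0}=1$. For the inductive step I assume the formula holds for $\mathcal{G}_{K-1}$ and track how each bit column of $\mathcal{G}_K$ is inherited from the two sub-sequences.

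The column $k=1$ is immediate: by construction the first bit equals $1$ on the first $2^{K-1}$ codewords and $0$ on the last $2^{K-1}$, so $\mathcal{B}_{K,1}=(\underbrace{1,\cdots,1}_{2^{K-1}},\underbrace{0,\cdots,0}_{2^{K-1}})$. For $k\geq 2$, prefixing shifts every original bit one position to the right, so the $k$-th column of $\mathcal{G}_K$ is the $(k-1)$-th column of the underlying $(K-1)$-bit sequences. On the first half this column is precisely $\mathcal{B}_{K-1,k-1}$; on the second half, since $\overline{\mathcal{G}}_{K-1}$ lists the codewords of $\mathcal{G}_{K-1}$ in reverse order, it is the reversal of $\mathcal{B}_{K-1,k-1}$. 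Hence $\mathcal{B}_{K,k}$ is the concatenation of $\mathcal{B}_{K-1,k-1}$ with its own reversal.

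It then remains to insert the induction hypothesis and simplify this concatenation, splitting into two subcases. When $k=2$ the inner column is $\mathcal{B}_{K-1,1}=(\underbrace{1,\cdots,1}_{2^{K-2}},\underbrace{0,\cdots,0}_{2^{K-2}})$; appending its reversal places two zero-runs adjacently, which fuse into one, giving $(\underbrace{1,\cdots,1}_{2^{K-2}},\underbrace{0,\cdots,0}_{2^{K-1}},\underbrace{1,\cdots,1}_{2^{K-2}})$, exactly the claimed $k=2$ pattern with a single interior block. When $k\geq 3$, the induction hypothesis gives $\mathcal{B}_{K-1,k-1}$ the symmetric block form $(\underbrace{1}_{2^{K-k}},\underbrace{0}_{2^{K-k+1}},\cdots,\underbrace{0}_{2^{K-k+1}},\underbrace{1}_{2^{K-k}})$, which is palindromic and therefore equals its own reversal; concatenating the two copies makes the trailing and leading half-runs of ones (each of length $2^{K-k}$) merge into one full run of length $2^{K-k+1}$, while every other block length is unchanged. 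The outcome is again a sequence of the same form, now with interior blocks of the correct length, which is precisely $\mathcal{B}_{K,k}$ at index $K$.

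The step to get right is the bookkeeping at the junction of the two halves: one must check both that the reversal \emph{reproduces} (rather than merely reflects) the palindromic pattern when $k\geq 3$, and that the adjacent equal-valued runs coalesce into a block of exactly length $2^{K-k+1}$, so that the half-length end-blocks survive and the interior full-length blocks keep alternating, starting and ending with $0$. A short length count settles this: the interior must hold $2^{k-1}-1$ full blocks, which together with the two half-blocks sum to $2^{K-k+1}+(2^{k-1}-1)2^{K-k+1}=2^{K}$, confirming that the merge neither creates nor destroys a block and thereby closing the induction.
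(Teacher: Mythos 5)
Your proof is correct and takes essentially the same route as the paper: both rest on the identity $\mathcal{B}_{K,k} = \left(\mathcal{B}_{K-1,k-1}\overline{\mathcal{B}}_{K-1,k-1}\right)$ coming from the reflect-and-prefix construction, together with the observation that the equal-valued runs meeting at the junction merge into blocks of length $2^{K-k+1}$. The only difference is organizational: the paper exploits the symmetry $\mathcal{B}_{K,k}=\overline{\mathcal{B}}_{K,k}$ to unfold the recursion all the way down to $2^{k-1}$ pairs $\mathcal{B}_{K-k+1,1}\overline{\mathcal{B}}_{K-k+1,1}$ and merges all runs at once, whereas you consume the recursion one step at a time by induction on $K$ with the explicit block formula as the hypothesis.
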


\begin{proof} Due to the construction rule of $\mathcal{G}_K$ from $\mathcal{G}_{K-1}$ which has $2^{K-1}$ codewords, it is obvious to see that $\mathcal{B}_{K, 1} = (\underbrace{1, \cdots, 1}_{2^{K-1}}, \underbrace{0, \cdots, 0}_{2^{K-1}})$. For $k \geq 2$, let $\overline{\mathcal{B}}_{K, k}$ be the reflected version of $\mathcal{B}_{K, k}$. Notice that $\mathcal{B}_{K, k}$ is symmetrical since $\mathcal{B}_{K, k}  = \left(\mathcal{B}_{K-1, k-1}\overline{\mathcal{B}}_{K-1, k-1}\right)$. Thus, $\mathcal{B}_{K, k} = \overline{\mathcal{B}}_{K, k}$ and we can write 
\begin{align}
\mathcal{B}_{K, k} & = \left(\mathcal{B}_{K-1, k-1}\overline{\mathcal{B}}_{K-1, k-1}\right) \nonumber \\
			     & = \left(\mathcal{B}_{K-2, k-2}\overline{\mathcal{B}}_{K-2, k-2}{\mathcal{B}}_{K-2, k-2}\overline{\mathcal{B}}_{K-2, k-2}\right) \nonumber \\
			     & ~~\vdots \nonumber \\
			     & = (\underbrace{\mathcal{B}_{K-k+1, 1}\overline{\mathcal{B}}_{K-k+1, 1}\cdots\mathcal{B}_{K-k+1,1}\overline{\mathcal{B}}_{K-k+1, 1}}_{2^{k-1} \text{pairs}~\mathcal{B}_{K-k+1,1}\overline{\mathcal{B}}_{K-k+1, 1}} ).
\label{bit-pattern}
\end{align}
Since $\mathcal{B}_{K-k+1, 1} = (\underbrace{1, \cdots, 1}_{2^{K-k}}, \underbrace{0, \cdots, 0}_{2^{K-k}})$ and $\overline{\mathcal{B}}_{K-k+1, 1} = (\underbrace{0, \cdots, 0}_{2^{K-k}}, \underbrace{1, \cdots, 1}_{2^{K-k}})$, the preposition is proved following the observation from \eqref{bit-pattern}. \qedhere
\end{proof}

The proposition shows in the trivial case of $k = 1$ that $b_{K, 1} = 1$ if $r \leq 0$ and $b_{K, 1} = 0$ if $r > 0$. For $k \geq 2$, a decision rule for $b_{K, k}$ can be written as 
\begin{align}
b_{K, k} = \left\{ {\begin{array}{*{20}{c}}
1 & {{\text{if~~}}}& r \odot \mathcal{R}^{(1)}_{K, k}, \\
0 & {{\text{if~~}}}& r \odot \mathcal{R}^{(0)}_{K, k}, \\
\end{array}} \right.
\end{align}
where $\mathcal{R}^{(1)}_{K, k} = \left\{\left(\delta^{(1), l}_{K, k}d,~\delta^{(1), u}_{K, k}d\right)~\Big|~\delta^{(1), l}_{K, k} \in \left\{\mathbb{Z}, -\infty\right\}, \right.$ $\left. \delta^{(1), u}_{K, k} \in \left\{\mathbb{Z}, +\infty\right\}\right\}$ and $\mathcal{R}^{(0)}_{K, k} = \left\{\left(\delta^{(0), l}_{K, k}d,~\delta^{(0), u}_{K, k}d\right)~\Big|~\delta^{(0), l}_{K k} \in \mathbb{Z},  \delta^{(0), u}_{K, k} \in \mathbb{Z}\right\}$ are the sets of decision regions of bits `1' and `0', respectively. Note that, bit `1's decision regions at the two extremes of the signal space are unbounded. This is represented by $ \delta^{(1), l}_{K, k} = -\infty$ and $ \delta^{(1), u}_{K, k} = +\infty$. With $k = 3$ from the above example, $\mathcal{R}^{(1)}_{3, 3} = \left\{\left(-\infty, -6d\right), ~\left(-2d, 2d\right),~\left(6d, +\infty\right)\right\}$ and $\mathcal{R}^{(0)}_{3, 3} = \left\{\left(-6d, -2d\right), ~\left(2d, 6d\right)\right\}$. We use the notation $r  \odot \mathcal{R}^{(1)}_{K, k}$ (or  $r  \odot \mathcal{R}^{(0)}_{K, k}$) to imply that there exist some $\left(\delta^{(1), l}_{K, k}d,~\delta^{(1), u}_{K, k}d\right) \in \mathcal{R}^{(1)}_{K, k}$ that satisfies $\delta^{(1), l}_{K, k}d \leq r <  \delta^{(1), u}_{K, k}d$. We now derive analytical expressions of $\left\{\delta^{(0), l}_{K, k}\right\}$,  $\left\{\delta^{(0), u}_{K, k}\right\}$, $\left\{\delta^{(1), l}_{K, k}\right\}$, and $\left\{\delta^{(1), u}_{K, k}\right\}$. 

A direct implication of the proposition is that the number of `1's and `0's regions of $b_{K, k}$ is $\left|\mathcal{R}^{(1)}_{K, k}\right| + \left|\mathcal{R}^{(0)}_{K, k}\right| = 2^{k-1}+ 1$.
Accordingly, let $\mathbf{p}_k = \begin{bmatrix}-1 & 0 & 1 & \cdots & 2^{k-1}-1 & 2^{k-1} \end{bmatrix}$ be the vector whose two consecutive elements represent the starting and ending points of a region. 
Denote $\Delta(x, K, k)$ as a function of $x$, $K$, and $k$, which returns the values of $\left\{\delta^{(0), l}_{K, k}\right\}$,  $\left\{\delta^{(0), u}_{K, k}\right\}$, $\left\{\delta^{(1), l}_{K, k}\right\}$, and $\left\{\delta^{(1), u}_{K, k}\right\}$ for $x$ taking on values from $\mathbf{p}_k$. 
To account for $ \delta^{(1), l}_{K, k} = -\infty$ and $ \delta^{(1), u}_{K, k} = +\infty$, it is  mathematically convenient to use the convention $\Delta(-1, K, k) = -\infty$ and $\Delta(2^{k-1}, K, k) = +\infty$. 
Observe that there are $2^{K-k}$ bits `1' in the first decision regions followed by interleaving regions of `0' and `1' each containing $2^{K-k + 1}$ bits  and ended by a region (`0' if $k = 1$ or `1' if $k \geq 2$) having $2^{K-k}$ bit. As the distance between any two adjacent codewords is $2d$ with the first one starting at $(-2^K+1)d$, the observation gives rise to the definition of $\Delta(x, K, k)$ as
\begin{align}
&\Delta(x, K, k) = \nonumber \\
& \left\{ {\begin{array}{*{5}{l}}
\!\!\!-\infty & , x = -1, \\
\!\!\!-2^K\left(1 - \frac{1}{2^{k-1}}\right) + x\times2^{K - k +2} &, -1 < x < 2^{k-1} , \\
\!\!\!+\infty &, x =  2^{k-1}.
\end{array}} \right.
\label{delta}
\end{align}
Then, $\left\{\delta^{(1), l}_{K, k}\right\}$, $\left\{\delta^{(1), u}_{K, k}\right\}$, $\left\{\delta^{(0), l}_{K, k}\right\}$, and  $\left\{\delta^{(0), u}_{K, k}\right\}$ are given by 
\begin{align}
&\left\{\delta^{(1), l}_{K, k}\right\} = \Delta\left(2\mathbf{p}_k^{(1)}+1, K, k\right), 
\label{delta_1_l}\\
&\left\{\delta^{(1), u}_{K, k}\right\} = \Delta\left(2\mathbf{p}_k^{(1)}+2, K, k\right),
\label{delta_1_u}\\
&\left\{\delta^{(0), l}_{K, k}\right\} = \Delta\left(2\mathbf{p}_k^{(0)} + 2, K, k\right), 
\label{delta_0_l}\\
&\left\{\delta^{(0), u}_{K, k}\right\} = \Delta\left(2\mathbf{p}_k^{(0)}+3, K, k\right), 
\label{delta_0_u}
\end{align}
where 
\begin{align}
\mathbf{p}_k^{(1)} = \left\{ {\begin{array}{*{20}{l}}
-1 & {{\text{if~~}}}  & k = 1,\\
\begin{bmatrix}-1 & 0 & \cdots  & 2^{k-2} - 1\end{bmatrix} & {{\text{if~~}}} & k \geq 2,
\end{array}} \right.
\label{region-1-index}
\end{align}
and
\begin{align}
\mathbf{p}_k^{(0)} = \left\{ {\begin{array}{*{20}{l}}
-1 & {{\text{if~~}}}  & k = 1,\\
\begin{bmatrix}-1 & 0 & \cdots & 2^{k-2} - 2 \end{bmatrix}& {{\text{if~~}}} & k \geq 2.
\end{array}} \right.
\label{region-0-index}
\end{align}
\subsection{Bit Positions and Conditional BER Expression}
\begin{figure*}[ht]
\begin{align}
&P_{\text{PAM}}(b_{K, k} )   = \text{Pr}\left(r \odot \mathcal{R}^{(1)}_{K, k} \Big| s \in \mathcal{A}^{(0)}_{K, k} \right) 
               +  \text{Pr}\left(r \odot \mathcal{R}^{(0)}_{K, k} \Big| s \in \mathcal{A}^{(1)}_{K, k} \right)
               = \frac{1}{{MK}}\left(\sum_{s \in \mathcal{A}^{(0)}_{K, k} }\!\text{Pr}\left(r \odot \mathcal{R}^{(1)}_{K, k} \right) 
               \! + \!\sum_{s \in \mathcal{A}^{(1)}_{K, k} }\!\text{Pr}\left(r \odot \mathcal{R}^{(0)}_{K, k} \right) \right) \nonumber \\
              & = \frac{1}{MK}\left(\sum_{\overline{p}^{(0)} \in \mathbf{\overline{p}}_k^{(0)}}\sum_{j^{(0)} \in \mathbf{j}^{\overline{p}^{(0)}}_{K, k}}\sum_{p^{(1)} \in \mathbf{p}_k^{(1)}}\text{Pr}\left(\delta^{p^{(1)}, l}_{K, k}d \leq A^{\overline{p}^{(0)}, j^{(0)}}_{K, k}d + n < \delta^{p^{(1)}, u}_{K, k}d\right) \right. \nonumber \\ 
              &~~~~~~~~~ \left.+ \sum_{\overline{p}^{(1)} \in \mathbf{\overline{p}}_k^{(1)}}\sum_{j^{(1)} \in \mathbf{j}^{\overline{p}^{(1)}}_{K, k}}\sum_{p^{(0)} \in \mathbf{p}_k^{(0)}}\text{Pr}\left(\delta^{p^{(0)}, l}_{K, k}d \leq A^{\overline{p}^{(1)}, j^{(1)}}_{K, k}d + n < \delta^{p^{(0)}, u}_{K, k}d\right) \right)  \nonumber \\
              & = \frac{1}{2MK}\left(\sum_{\overline{p}^{(0)} \in \mathbf{\overline{p}}_k^{(0)}}\sum_{j^{(0)} \in \mathbf{j}^{\overline{p}^{(0)}}_{K, k}}\sum_{p^{(1)} \in \mathbf{p}^{(1)}_k}\text{erfc}\left(\frac{d}{\sqrt{N_0}}\left(\delta^{p^{(1)}, l}_{K, k} - A^{\overline{p}^{(0)}, j^{(0)}}_{K, k}\right)\right) - \text{erfc}\left(\frac{d}{\sqrt{N_0}}\left(\delta^{p^{(1)},u}_{K, k} - A^{\overline{p}^{(0)}, j^{(0)}}_{K, k}\right)\right) \right . \nonumber \\ & \left. ~~~~~~~~~~~ +\sum_{\overline{p}^{(1)} \in \mathbf{\overline{p}}_k^{(1)}}\sum_{j^{(1)} \in \mathbf{j}^{\overline{p}^{(1)}}_{K, k}}\sum_{p^{(0)} \in \mathbf{p}^{(0)}_k}  \text{erfc}\left(\frac{d}{\sqrt{N_0}}\left(\delta^{p^{(0)}, l}_{K, k} - A^{\overline{p}^{(1)}, j^{(1)}}_{K, k} \right)\right)  - \text{erfc}\left(\frac{d}{\sqrt{N_0}}\left(\delta^{p^{(0)}, u}_{K, k} - A^{\overline{p}^{(1)}, j^{(1)}}_{K, k}\right)\right)  \vphantom{\sum_{\left(\delta^{(1), l}_{K, k}d, \delta^{(1), u}_{K, k}d\right) \in \mathcal{R}^{(1)}_{K, k} }\sum_{q = -\frac{M}{2} }^{\frac{M}{2} - 1}} \right).
\label{BER-PAM}             
\end{align}
\noindent\makebox[\linewidth]{\rule{\textwidth}{1pt}}
\end{figure*}
To formulate an expression for the BER of $b_{K, k}$, it is now necessary to determine the positions of the elements of $\mathcal{B}_{K, k}$ in the signal space. 
In order to so, let $\mathcal{A}^{(1)}_{K, k} = \left\{A^{(1)}_{K, k}d ~ \big| ~ A^{(1)}_{K, k} \in \mathbb{Z} \right\}$ and $\mathcal{A}^{(0)}_{K, k} = \left\{A^{(0)}_{K, k}d ~ \big| ~ A^{(0)}_{K, k} \in \mathbb{Z} \right\}$ be the sets of positions of the symbols whose $b_{K, k}$'s are `1' and `0', respectively. In our example of $8$-PAM in Fig. \ref{bit_demaping_1}, for $k = 3$, we have $\mathcal{A}^{(1)}_{3, 3} = \left\{-7d, -d, d, 7d  \right\}$ and $\mathcal{A}^{(0)}_{3, 3} = \left\{-5d, -3d, 3d, 5d  \right\}$. Since the values of $\mathcal{A}^{(1)}_{K, k}$ $(\mathcal{A}^{(0)}_{K, k})$ lie on $\mathcal{R}^{(1)}_{K, k}$ $(\mathcal{R}^{(0)}_{K, k})$, they can be determined using the defined $\Delta(\cdot)$ function. To make characterizations of  $\mathcal{A}^{(0)}_{K, k}$ and $\mathcal{A}^{(1)}_{K, k}$ convenient, 
we bound the two unbounded regions at $-2^K$ and $2^K$, which correspond to $\Delta\left(-\frac{1}{2}, K, k\right)$ and $\Delta\left(2^{k-1} - \frac{1}{2}, K, k\right)$, respectively. This naturally necessitates a definition of $\mathbf{\overline{p}}_k = \begin{bmatrix}-\frac{1}{2} & 0 & 1 &\cdots & 2^{k-1} - 1& 2^{k-1} - \frac{1}{2} \end{bmatrix}$. 
Hence, we can write $\mathcal{A}^{(1)}_{K, k} = \left\{\left\{A_{K, k}^{(1)}\right\}_{\left[\mathbf{\overline{p}}^{(1)}_k\right]_i}\right\}$ (similarly for $\mathcal{A}^{(0)}_{K, k}$), where $\left\{A_{K, k}^{(1)}\right\}_{\left[\mathbf{\overline{p}}^{(1)}_k\right]_i}$ is the set of bit `1's positions corresponding to $\left[\mathbf{\overline{p}}^{(1)}_k\right]_i$, which is the $i-$th element of $\mathbf{\overline{p}}^{(1)}_k$ and is defined as
\begin{align}
\mathbf{\overline{p}}_k^{(1)} = \left\{ {\begin{array}{*{20}{l}}
-\frac{3}{4} & {{\text{if~~}}}  & k = 1,\\
\begin{bmatrix}-\frac{3}{4} & 0 & \cdots  & 2^{k-2} - 1\end{bmatrix} & {{\text{if~~}}} & k \geq 2.
\end{array}} \right.
\label{region-1-index}
\end{align}
Similarly, denote $\left[\mathbf{\overline{p}}^{(0)}_k\right]_i$ as the $i-$th element of $\mathbf{\overline{p}}^{(0)}_k$, which is defined to be the same as  $\mathbf{{p}}^{(0)}_k$ in \eqref{region-0-index}.
We also define the length of a region $\left(\Delta\left([\mathbf{\overline{p}}_k]_i, K, k\right)~\Delta\left([\mathbf{\overline{p}}_k]_{i+1}, K, k\right)\right)$ straightforwardly as
\begin{align}
l\left([\mathbf{\overline{p}}_k]_i, K, k\right) &= \Delta([\mathbf{\overline{p}}_k]_{i + 1}, K, k) - \Delta\left([\mathbf{\overline{p}}_k]_i, K, k\right)\nonumber \\
& = \left([\mathbf{\overline{p}}_k]_{i+1} - [\mathbf{\overline{p}}_k]_i\right)\times2^{K - k +2}.
\label{delta-modified}
\end{align}
Now, $\left\{A_{K, k}^{(1)}\right\}_{\left[\mathbf{\overline{p}}^{(1)}_k\right]_i} $ and $\left\{A_{K, k}^{(0)}\right\}_{\left[\mathbf{\overline{p}}^{(0)}_k\right]_i}$ are respectively given by
\begin{align}
\left\{A_{K, k}^{(1)}\right\}_{\left[\mathbf{\overline{p}}^{(1)}_k\right]_i}  = \Delta\left(2\left[\mathbf{\overline{p}}^{(1)}_k\right]_i + 1, K, k\right) + 2\mathbf{j}_{K, k}^{\left[\mathbf{\overline{p}}^{(1)}_k\right]_i} + 1,
\label{position_1}
\end{align}
and
\begin{align}
\left\{A_{K, k}^{(0)}\right\}_{\left[\mathbf{\overline{p}}^{(0)}_k\right]_i} = \Delta\left(2\left[\mathbf{\overline{p}}^{(0)}_k\right]_i + 2, K, k\right) + 2\mathbf{j}_{K, k}^{\left[\mathbf{\overline{p}}^{(0)}_k\right]_i} + 1,
\label{position_0}
\end{align}
, where $\mathbf{j}^{\left[\mathbf{\overline{p}}^{(1)}_k\right]_i}_{K, k} =  \begin{bmatrix} 0 & 1 & \cdots & \frac{l\left(2\left[\mathbf{\overline{p}}^{(1)}_k\right]_i + 1, K, k\right)}{2} -1 \end{bmatrix}$ and $\mathbf{j}^{\left[\mathbf{\overline{p}}^{(0)}_k\right]_i}_{K, k} = \begin{bmatrix} 0 & 1 & \cdots & \frac{l\left(2\left[\mathbf{\overline{p}}^{(0)}_k\right]_i + 2, K, k\right)}{2} -1 \end{bmatrix}$.

For each transmitted symbol $s$, $b_{K,k}$ is decoded wrongly when $r$ falls into bit `1's decision regions given that $s$ is chosen from the sets of `0's positions or vice-versa. For expressional convenience, let $\delta^{p^{(1)}, l}_{K, k}$,  $\delta^{p^{(1)}, u}_{K, k}$, $\delta^{p^{(0)}, l}_{K, k}$, and $\delta^{p^{(0)}, u}_{K, k}$ be the elements of $\left\{\delta^{(1), l}_{K, k}\right\}$, $\left\{\delta^{(1), u}_{K, k}\right\}$, $\left\{\delta^{(0), l}_{K, k}\right\}$,  and $\left\{\delta^{(0), u}_{K, k}\right\}$ in \eqref{delta_1_l}-\eqref{delta_0_u}, which correspond to some $p^{(1)}_k \in \mathbf{p}^{(1)}_k$ and $p^{(0)}_k \in \mathbf{p}^{(0)}_k$.
Also, let $A^{\overline{p}^{(1)}, j^{(1)}}_{K, k}$ and $A^{\overline{p}^{(0)}, j^{(0)}}_{K, k}$ are the elements of $\left\{A_{K, k}^{(1)}\right\}_{\left[\mathbf{\overline{p}}^{(1)}_k\right]_i} $ and $\left\{A_{K, k}^{(0)}\right\}_{\left[\mathbf{\overline{p}}^{(0)}_k\right]_i}$, which correspond to some $\overline{p}^{(1)}_{K, k} \in \overline{\mathbf{p}}^{(1)}_{K, k}$, $\overline{p}^{(0)}_{K, k} \in \overline{\mathbf{p}}^{(0)}_{K, k}$, $j^{(1)} \in \mathbf{j}^{\overline{p}^{(1)}_{K, k}}_{K, k}$, and  $j^{(0)} \in \mathbf{j}^{\overline{p}^{(0)}_{K, k}}_{K, k}$. The error probability of $b_{K, k}$ can then be expressed in \eqref{BER-PAM}, which is on top of the previous page. Note that, $\text{erfc}(x) = \frac{2}{\sqrt{\pi}}\int_{x}^{\infty}e^{-t^2}dt$ is the complementary error function. 

\section{Bit Error Rate of QAM with Phase Noise}
We now consider a rectangular $M_I \times M_Q$ QAM, whose signal constellation can be seen as a combination of two PAM signalings. The signal waveform is expressed by
\begin{align}
s(t) = A_{I}\cos\left(2\pi f_c t\right) - A_{Q}\sin\left(2\pi f_c t\right),
\end{align}
where  $A_{I}$ and $A_{Q}$ are the signal amplitudes with respect to the I and Q channels, respectively. Also, let $d_I$ and $d_Q$ be the half distances between two adjacent symbols on the two channels. Without lost of generality, we assume that $d_I = d_Q = d$, which is given by
\begin{align}
d = \sqrt{\frac{3\log_2\left(M_IM_Q\right)E_b}{M_I^2 + M_Q^2 - 2}}.
\end{align}
The received waveform after being distorted by an AWGN and phase noise can be written as
\begin{align}
r(t) \!=\! A_{I}\cos\left(2\pi f_c t + \theta_e \right) \!- \! A_{Q}\sin\left(2\pi f_c t + \theta_e \right) + n,
\end{align}
where the phase noise component $\theta_e \in [-\pi ~ \pi]$ caused by the imperfections of the phase-locked loop and/or channel estimation can be modeled by a Tikhonov distribution \cite{Gappmair2017}. The I and Q components of the received waveform are then given by
\begin{align}
r_I&= A_I\cos(\theta_e) - A_Q\sin(\theta_e) + n,   \\
r_Q &= A_I\sin(\theta_e) + A_Q\cos(\theta_e) + n.
\end{align}
Denote $m_I = \log_2(M_I)$ and $m_Q = \log_2(M_Q)$, the conditional BER of the $M_I \times M_Q$ QAM can be calculated  by summing up the conditional BERs of the corresponding $m_I$- and $m_Q$-PAM, resulting in
\begin{align}
P_{\text{QAM}}(\theta_e) = \sum_{k = 1}^{m_I} P_{\text{e}}(i_{m_I, k}|\theta_e) + \sum_{k = 1}^{m_Q} P_{\text{e}}(q_{m_Q,k}|\theta_e),
\end{align}
where $i_{m_I, k}$ and $q_{m_Q, k}$ are the $k-$th bits of the symbols of the $M_I$- and $M_Q$-PAM. 
Following the BER expression derived for the $M$-PAM in \eqref{BER-PAM}, the conditional BERs of  $i_{m_I, k}$ and $q_{m_Q, k}$ conditioned on $\theta_e$ denoted as $P_{\text{e}}(i_{m_I, k}|\theta_e)$ and $P_{\text{e}}(q_{m_Q, k}|\theta_e)$ are given by 
\begin{align}
&P_{\text{e}}(i_{m_I, k} | \theta_e) = \frac{1}{2(m_I + m_Q)M_I M_Q} \nonumber \\
& \times \left(\sum_{\overline{p}^{(0)} \in \mathbf{\overline{p}}_k^{(0)}}\sum_{j^{(0)} \in \mathbf{j}^{\overline{p}^{(0)}}_{m_I, k}}\sum_{p^{(1)} \in \mathbf{p}^{(1)}_k}\sum_{q = -\frac{M_Q}{2}}^{\frac{M_Q}{2}-1}\right. \nonumber \\ & \left.  ~~~~~~~\Psi\left(\frac{d}{\sqrt{N_0}}, \delta^{p^{(1)}, l}_{m_I, k}, -A^{\overline{p}^{(0)}, j^{(0)}}_{m_I, k}, 2q+1, \theta_e\right) \right. \nonumber \\ & \left. ~~~~~- \Psi\left(\frac{d}{\sqrt{N_0}},\delta^{p^{(1)},u}_{m_I, k}, -A^{\overline{p}^{(0)}, j^{(0)}}_{m_I, k}, 2q+1, \theta_e\right)  \right. \nonumber \\ & \left. ~~+\sum_{\overline{p}^{(1)} \in \mathbf{\overline{p}}_k^{(1)}}\sum_{j^{(1)} \in \mathbf{j}^{\overline{p}^{(1)}}_{m_I, k}}\sum_{p^{(0)} \in \mathbf{p}^{(0)}_k}\sum_{q = -\frac{M_Q}{2}}^{\frac{M_Q}{2}-1} \right. \nonumber \\ & \left.  ~~~~~~~\Psi\left(\frac{d}{\sqrt{N_0}}, \delta^{p^{(0)}, l}_{m_I, k},  -A^{\overline{p}^{(1)}, j^{(1)}}_{m_I, k}, 2q+1, \theta_e \right)  \right. \nonumber \\ & \left. ~~~~~- \Psi\left(\frac{d}{\sqrt{N_0}}, \delta^{p^{(0)}, u}_{m_I, k}, -A^{\overline{p}^{(1)}, j^{(1)}}_{m_I, k}, 2q+1, \theta_e\right)  \vphantom{\sum_{\left(\delta^{(1), l}_{K, k}d, \delta^{(1), u}_{m_I, k}d\right) \in \mathcal{R}^{(1)}_{K, k} }\sum_{q = -\frac{M}{2} }^{\frac{M}{2} - 1}} \!\!\! \right),
\label{BER-QAM-I}             
\end{align}
and 
\begin{align}
&P_{\text{e}}(q_{m_Q, k} |\theta_e)  = \frac{1}{2(m_I + m_Q)M_IM_Q} \nonumber \\ & \times \left(\sum_{\overline{p}^{(0)} \in \mathbf{\overline{p}}_k^{(0)}}\sum_{j^{(0)} \in \mathbf{j}^{\overline{p}^{(0)}}_{m_Q, k}}\sum_{p^{(1)} \in \mathbf{p}^{(1)}_k}\sum_{i = -\frac{M_I}{2}}^{\frac{M_I}{2}-1} \right. \nonumber \\ & \left. ~~~~~~~\Psi\left(\frac{d}{\sqrt{N_0}}, \delta^{p^{(1)}, l}_{m_Q, k}, -A^{\overline{p}^{(0)}, j^{(0)}}_{m_Q, k}, - (2i+1), \theta_e\right)  \right. \nonumber \\ & \left. ~~~~~- \Psi\left(\frac{d}{\sqrt{N_0}}, \delta^{p^{(1)},u}_{m_Q, k}, -A^{\overline{p}^{(0)}, j^{(0)}}_{m_Q, k}, - (2i +1), \theta_e\right) \right . \nonumber \\ & \left.~~  +\sum_{\overline{p}^{(1)} \in \mathbf{\overline{p}}_k^{(1)}}\sum_{j^{(1)} \in \mathbf{j}^{\overline{p}^{(1)}}_{m_Q, k}}\sum_{p^{(0)} \in \mathbf{p}^{(0)}_k}\sum_{i = -\frac{M_I}{2}}^{\frac{M_I}{2}-1} \right. \nonumber \\ & \left. ~~~~~~~\Psi\left(\frac{d}{\sqrt{N_0}}, \delta^{p^{(0)}, l}_{m_Q, k}, -A^{\overline{p}^{(1)}, j^{(1)}}_{m_Q, k}, - (2i+1), \theta_e\right) \right. \nonumber \\ & \left.  ~~~~~- \Psi\left(\frac{d}{\sqrt{N_0}}, \delta^{p^{(0)}, u}_{m_Q, k}, -A^{\overline{p}^{(1)}, j^{(1)}}_{m_Q, k}, -(2i+1), \theta_e\right)  \vphantom{\sum_{\left(\delta^{(1), l}_{K, k}d, \delta^{(1), u}_{m_Q, k}d\right) \in \mathcal{R}^{(1)}_{K, k} }\sum_{i = -\frac{M}{2} }^{\frac{M}{2} - 1}} \!\!\! \right),
\label{BER-QAM-Q}             
\end{align}
respectively. In the above expressions, we define $\Psi(\alpha, x, y, z, \varphi) = \text{erfc}\left(\alpha\left(x+y\cos(\varphi)+z\sin(\varphi)\right)\right)$. 

\section{Numerical Examples and Dicussions}
Firstly, we show in Fig. \ref{square_QAM} the conditional BER of square $M$-QAM with different constellation sizes as a function of ${E_b}/{N_0}$. In case of the performance with phase noise, $\theta_e = \pi/180$ is chosen. Similar to the case of PSK, we observed significant performance losses due to phase noise when the constellation size is large. At conditional BER = $10^{-3}$, performance losses are negligible in the case of $4$- and $16$-QAM. However, these are about $0.25$ dB, $1$ dB, and $3.6$ dB in the case of $64$-, $256$-, and $1024$-QAM, respectively.
\begin{figure}[ht]
\centering
\includegraphics[scale = 0.315]{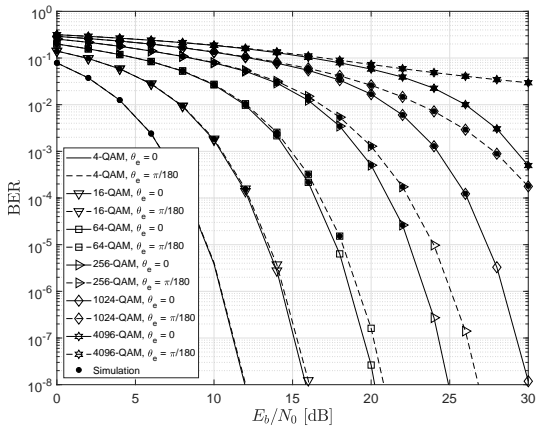}
\caption{BER of square QAM for different constellation sizes. }
\label{square_QAM}
\end{figure}

It is well-known that square QAM outperforms other rectangular constellations of the same size (i.e., the product $M_IM_Q$ is kept unchanged) in terms of the BER. Motivated by this, in Fig. \ref{rectangular_QAM}, we examine the impact of phase noise on different rectangular shapes of the same size. In this example, $32\times2$, $16\times4$, and $8\times8$ QAM are taken for demonstration. The result shows that the more balance between signal point sizes on the I and Q channels is, the less performance degradation due to phase noise becomes. For instance, at conditional BER = $10^{-4}$, the performance losses of $32\times2$, $16\times4$, and $8\times8$ QAM are $3.4$ dB, $0.75$ dB, and $0.3$ dB, respectively.  
As such, it is evident that the square QAM also suffers less phase noise-induced performance loss than other rectangular configurations. 
\begin{figure}[ht]
\centering
\includegraphics[scale = 0.315]{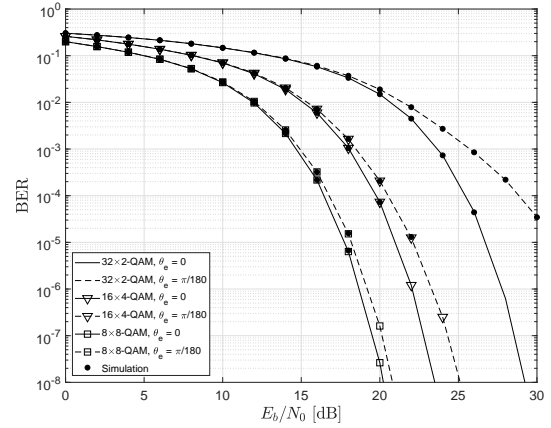}
\caption{BER of different rectangular QAM  configurations. }
\label{rectangular_QAM}
\end{figure}
\section{Conclusion}
We have derived an exact closed-form expression for the conditional BER of rectangular QAM given a value of phase noise. The derivation was based on analytical characterizations of bit decision regions and positions. Numerical results showed that significant performance losses are possible in the case of large constellation sizes. Besides, with the same constellation size, the square QAM is preferable in terms of both performance and loss due to phase noise. 
\bibliographystyle{ieeetr}
\bibliography{ref}
\end{document}